\documentclass[12pt]{article}
\usepackage{amssymb}
\usepackage{amsmath,graphicx,amsfonts,epsfig,eucal}
\usepackage{amsfonts}
\usepackage{color}
\usepackage{amsthm}

\usepackage[utf8]{inputenc}
\usepackage[english]{babel}

\newtheorem{prop}{Proposition}
\newtheorem{defn}{Definition}
\newtheorem{rem}{Remark}

\begin{document}
\title{On the stability of the steady-state of a general model of endogenous growth with two $CES$ production functions}
\author{Constantin Chilarescu\footnote{e-mail: constantin.chilarescu@univ-lille.fr}}
\date{}
\maketitle
\centerline{\it University of Lille, France}
\maketitle
\noindent {\small {\bf Abstract} The main aim of this paper is to study the steady-state properties of a general Bond-type endogenous growth model, considering that both sectors are modeled by two distinct $CES$ production functions. We prove here that in this case, we cannot claim the saddle-path stability.}

\noindent {\small {\bf Keywords}: endogenous growth, Hamiltonian function, stability of steady-state.}
\noindent {\small {\bf JEL\ Classifications:} C61, C62, E21, O41.}
\date{}
\maketitle
\section{Introduction}
In a reference paper on endogenous growth, Bond et al. $(1996)$ proposed a general two-sector model of endogenous growth, consisting of a goods sector and an education sector, in which each sector produces the
respective type of capital (i.e., physical or human) under conditions of constant returns to scale. In their paper, the authors claim that their approach differs from that of other work in this area, which has generally approached the problem by using simulation with specific functional forms, as for example Rebelo $(1991)$ or Benhabib and Perli $(1994)$. They also claim that they prove that the model is characterized by the existence and uniqueness of the balanced growth equilibrium and show that the system will be saddle-path stable. In this paper we will prove that this last statement of the paper of Bond et al. is not true for any choice of the two production functions. To do this we will consider the case of a model with two distinct $CES$ production functions.

The paper is organized as follows. In the next section of this paper presents the model with two distinct $CES$ production functions. In the third section we prove the existence and uniqueness of the steady-state equilibrium and in the fourth section we present our main result. The final section gives some conclusions.
\section{The general two-sector model}
The two sectors considered in this paper are the good sector that produces consumable and gross investment in physical capital, and the education sector that produces human capital, both of them under conditions
of constant returns to scale. Without loss of generality, we suppose that the economy is populated by a large and constant number of identical agents, normalized to one, who seek to maximize their utility function.
The good sector combines a fraction $v$ of the stock of physical capital with a fraction $u$ of the stock of human capital.
The output of the educational sector increases the supply of effective labor units by adding to the existing stock of human capital, that will then be used in both sectors. The educational sector combines the remaining fractions of stocks.
Our model is characterized by the following optimization problem.
\begin {defn}
The set of paths $\left\{k, h, c, u, v\right\}$ is called an optimal
solution if it solves the following optimization problem:
\begin {equation}\label{eqof}
V_0 = \max
\int\limits_0^{\infty}\frac{c^{1-\varepsilon}-1}{1-\varepsilon}e^{-\rho t}dt,
\end {equation}
\noindent subject to
\begin {equation}\label{eqres}
\left\{
 \begin{array}{lll}
\dot{k} = A_1\left[\alpha_1\,\left(kv\right)^{\psi_1}+\left(1-\alpha_1
\right)\left(hu\right)^{\psi_1}\right]^{\frac{1}{\psi_1}} - c-\delta_{k} k,\\
\dot{h} = A_2\left[\alpha_2\,\left[k\left(1-v\right)\right]^{
\psi_2}+\left(1-\alpha_2\right)\left[h\left(1-u\right)\right]
^{\psi_2}\right]^{\frac{1}{\psi_2}}-\delta_{h} h,\\
k_0 = k(0),\;h_0 = h(0).
  \end{array}
  \right.
\end {equation}
\end {defn}
\noindent In this model $h$ represents the human capital
or the skill level and $k$ is capital per labor unit. $\alpha_1 \in (0,1)$ and $\alpha_2 \in (0, 1)$ are distribution parameters,
$\psi_{i} = \frac{\sigma_{i}-1}{\sigma_{i}},\;\psi_{i} < 1,\; i = 1, 2$ are the substitution parameters.
Since the two elasticities of substitution of the two sectors are different $\sigma_1 \neq \sigma_2$, the same hypothesis is to be considered for the two substitution parameters $\psi_1 \neq \psi_2$.
$\rho$ is the rate of time preference, $\delta_{k}$ and $\delta_{h}$ are the depreciation rates,
$A_1$ and $A_2$ are efficiency parameters, and $c \geq 0$ is
the real per-capita consumption. $\varepsilon^{-1}$ represents the
constant elasticity of intertemporal substitution and $k_0$ and $h_0$ are given values.
In order to simplify the writing procedure, we use the following notations:
$$\bar{P}_1=\alpha_1\left(kv\right)^{\psi_1}+\left(1-\alpha_1
\right)\left(hu\right)^{\psi_1},\bar{P}_2=\alpha_2\left[k\left(1-v\right)\right]^{
\psi_2}+\left(1-\alpha_2\right)\left[h\left(1-u\right)\right]^{\psi_2}$$
To solve the problem \eqref{eqof} subject to  \eqref{eqres}, we define the
Hamiltonian function:
\begin{equation}\label{eqham}
H = \frac{c^{1-\varepsilon}-1}{1-\varepsilon} + \left\{A_1\left[\bar{P}_1\right]^{\frac{1}{\psi_1}} - c-\delta_{k} k\right\}\lambda + \left\{A_2\left[\bar{P}_2\right]^{\frac{1}{\psi_2}}-\delta_{h} h\right\}\mu.
\end{equation}
\noindent The boundary conditions include initial conditions $\bar{h},
\;\bar{k}$ and transversality condition:
$$\lim\limits_{t\rightarrow\infty} e^{-\rho t}\lambda(t)
k(t) = 0 \;\;\mbox{and}\;\; \lim\limits_{t\rightarrow\infty}
e^{-\rho t}\mu(t) h(t) = 0.$$
In our model there are three control variables, $c$, $v$ and $u$, and two
state variables, $k$ and $h$. Differentiating the Hamiltonian function
with respect to $u$ and then with respect to $v$, we obtain the
following relations:
\begin{equation}\label{eqc}
c^{-\varepsilon} = \lambda,
\end{equation}
\begin{equation}\label{eqderu}
\frac{A_2\left[\bar{P}_2\right]^{\frac{1}{\psi_2}-1}\left(1-\alpha_2 \right)
\left[h\left(1-u\right)\right]^{\psi_2}\mu}{1-u}=\frac {A_1\left[\bar{P}_1\right]^{\frac{1}{\psi_1}-1}\left(1-\alpha_1\right)\left(hu\right)^{\psi_1}\lambda}{u}
\end{equation}
\begin{equation}\label{eqderv}
\frac {A_2\left[\bar{P}_2\right]^{\frac{1}{\psi_2}-1}\alpha_2\left[k\left(1-v\right)\right]^{\psi_2}\mu}{1-v
}=\frac{A_1\left[\bar{P}_1\right]^{\frac{1}{\psi_1}-1}\alpha_1\,\left(kv\right)^{\psi_1}\lambda}{v}.
\end{equation}
From equation \eqref{eqderv}, we get:
\begin{equation}\label{eqlm}
\frac{\mu}{\lambda} = \frac{A_1\alpha_1\theta^{\psi_1-\psi_2}}{A_2\alpha_2}\frac{P_1^{\frac{1}{\psi_1}-1}}{P_2^{\frac{1}{\psi_2}-1}},
\end{equation}
where $z > 0$, $w > 0$
\begin{equation}\label{eqsr}
 z = \frac{k}{h},\;z\frac{v}{u}=\theta w,\;w=
\left[\frac{v(1-u)}{u(1-v)}\right]^{\frac{1-\psi_2}{\psi_1-\psi_2}},\;
\theta=\left[\frac{\alpha_2(1-\alpha_1)}{\alpha_1(1-\alpha_2)}\right]^{\frac{1}{\psi_1-\psi_2}},
\end{equation}
and,
\begin{equation}\label{eqp1p2}
P_1=\alpha_1 \theta^{\psi_1} w^{\psi_1}+1-\alpha_1,\; P_2=\alpha_2\theta^{\psi_2}
w^{\frac{\psi_2(1-\psi_1)}{1-\psi_2}}+1-\alpha_2.
\end{equation}
Differentiating the Hamiltonian function with respect to $k$ and then with respect to $h$, we obtain
the differential equations describing the trajectories of the dual variables.
\begin{equation}\label{eqsysb1}
\frac{\dot{\lambda}}{\lambda}=\rho+\delta_{k}-\alpha_1A_1\left(\theta w\right)^{\psi_1-1}P_1^{\frac{1}{\psi_1}-1},\;
\frac{\dot{\mu}}{\mu}=\rho+\delta_{h}-\left(1-\alpha_2\right)A_2 P_2^{\frac{1}{\psi_2}-1}.
\end{equation}
Differentiating with respect to time the equations \eqref{eqlm} and \eqref{eqsr} we obtain the following differential equations:
\begin{equation}\label{eqdifuv1}
\frac{G_1}{1-u}\frac{\dot{u}}{u}-\frac{G_2}{1-v}\frac{\dot{v}}{v}=
(\psi_1-\psi_2)\left(D+\frac{c}{k}\right),
\end{equation}
\begin{equation}\label{eqdifuv2}
\frac{\dot{u}}{u}-\frac{\dot{v}}{v}=-\left(D+\frac{c}{k}\right)-\frac{QP}{\left(1-\psi_1\right)T},
\end{equation}
where
$$\frac{\dot{k}}{k}-\frac{\dot{h}}{h}=-\left(D+\frac{c}{k}\right),
\;\frac{\dot{\mu}}{\mu}-\frac{\dot{\lambda}}{\lambda}=P,$$
$$D=A_2(1-u)P_2^{\frac{1}{\psi_2}}
-A_1v\left(\theta w\right)^{-1}P_1^{\frac{1}{\psi_1}}+\delta_k-\delta_h,$$
$$P=\alpha_1A_1\left(\theta w\right)^{\psi_1-1}P_1^{\frac{1}{\psi_1}-1}
-\left(1-\alpha_2\right)A_2P_2^{\frac{1}{\psi_2}-1}
-\left(\delta_k-\delta_h\right),$$
$$T=\alpha_1\left(1-\alpha_2\right)\left(\theta w\right)^{\psi_1}\frac{w^\frac{\psi_1-\psi_2}{1-\psi_2}-1}{w^\frac{\psi_1-\psi_2}{1-\psi_2}},
\;Q=P_1P_2,\;R=\left(1-\psi_1\right)\left(1-\psi_2\right)T.$$
$$G_1=(\psi_1-\psi_2)u+1-\psi_1,\;G_2=(\psi_1-\psi_2)v+1-\psi_1,$$
Now, we can close the system and write down the final form of the five differential equations,
\begin{equation}\label{eqsysb}
\left\{
  \begin{array}{llllll}
\frac{\dot{k}}{k} = A_1v\left(\theta w\right)^{-1}P_1^{\frac{1}{\psi_1}} - \frac{c}{k}-\delta_{k},\\\\
\frac{\dot{h}}{h} =A_2P_2^{\frac{1}{\psi_2}}(1-u) -\delta_{h},\\\\
\frac{\dot{c}}{c} = -\frac{\rho+\delta_{k}}{\varepsilon}+\frac{\alpha_1A_1\left(\theta w\right)^{\psi_1-1}}{\varepsilon}P_1^{\frac{1}{\psi_1}-1},\\\\
\frac{\dot{u}}{u}=\left[D+\frac{c}{k}+\frac{Q G_2P}{R}\right]\frac{1-u}{u-v},\\\\
\frac{\dot{v}}{v}=\left[D+\frac{c}{k}+\frac{Q G_1P}{R}\right]\frac{1-v}{u-v}.\\
\end{array}
  \right.
\end{equation}
The last two equations of the above system result immediately from the equations \eqref{eqdifuv1} and  \eqref{eqdifuv2}. As we can observe from the definition of the variable $w$, the function $R$ is always different from zero, except when $w = 1$ that is, in other words, when $u = v$. Without loss of generality, everywhere in this paper, we assume that $u \neq v$.
\section{The balanced growth path}
The system described above reaches the balanced growth path if
there exists $t_* > 0$, such that for all $t \geq t_*$, the growth rates of the control variables
$u$ and $v$ are equal to zero.
To simplify the notation we denote by $r_x$ the growth
rate of variable $x$, by $x^*$ its value along the balanced growth
path $(t \geq t_*)$, and by $x_*$ its value for $t = t_*$. The following
proposition gives our first main result.
\begin{prop}\label{prop1}
Let $$\left(\varepsilon-1\right)\left[\delta_h-A_2\left(1-\alpha_2\right)^{\frac{1}{\psi_2}}\right]<\rho$$ $$<
\min\left\{A_1\alpha_1^{\frac{1}{\psi_1}}-\delta_k,A_2(1-\alpha_2)
^{\frac{1}{\psi_2}}+(\varepsilon-1)\delta_h\right\}.$$ If for all $t \geq t_* $,\; $r_u = r_v = 0$, then the above system reaches the unique balanced growth path and the following statements are valid
\begin{enumerate}
\item[i.] $r_{k^*} = r_{h^*} = r_{c^*}  = r_{y^*_1} = r_{y^*_2} = r_*$, where $y_1$ and $y_2$ represent the two $CES$ production functions and
\begin{equation}\label{eqsolr}
r_*=\frac{1}{\varepsilon}\left[\alpha_1A_1\left(\theta w\right)^{\psi_1-1}P_1^{\frac{1}{\psi_1}-1}
-\rho-\delta_{k}\right]
\end{equation}
\item [ii.] $w_{*}$ is the unique solution of the equation
\begin{equation}\label{eqsolz}
P(w) =\alpha_1A_1\left(\theta w\right)^{\psi_1-1}P_1^{\frac{1}{\psi_1}-1}-\left(1-\alpha_2\right)A_2P_2^{\frac{1}{\psi_2}-1}
-\left(\delta_k-\delta_h\right)=0,
\end{equation}
\item[iii.] $u_* \in [0, 1]$, $v_* \in [0, 1]$, $\tau=w^{\frac{\psi_1-\psi_2}{1-\psi_2}}$ and
\begin{equation}\label{eqsolu}
u_* = 1-\frac{r_*+\delta_h}{A_2P^{\frac{1}{\psi_2}}_2},\;v_* = \frac{\tau u_*}{1+\left(\tau-1\right)u_*},
\end{equation}
\item[iv.]
\begin{equation}\label{eqsolc/k}
q_*=\frac{c_*}{k_*} = \frac{1}{\varepsilon} \left[A_1\left(\theta w\right)^{-1}P_{\varepsilon}P_1 ^{\frac{1}{\psi_1}-1}+\rho-\delta_k\left(\varepsilon-1\right)\right],
\end{equation}
where $$P_{\varepsilon}=\alpha_1(\varepsilon v_*-1)\left(\theta w\right)^{\psi_1}+\varepsilon(1-\alpha_1)v_*.$$
\end{enumerate}
\end{prop}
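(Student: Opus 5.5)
Since $r_u=r_v=0$ for $t\ge t_*$, we have $\dot w=0$. Hence $w$, $P_1$, $P_2$, $T$, $Q$, $R$, $G_1$, $G_2$ are all constant on the balanced path, and I will work entirely with the reduced system \eqref{eqsysb}. The plan is to set the last two equations of \eqref{eqsysb} equal to zero and subtract them. Because $u\neq v$ and $G_1-G_2=(\psi_1-\psi_2)(u-v)\neq 0$, subtracting gives $\frac{QP}{R}\left[G_2(1-u)-G_1(1-v)\right]=0$. I expect the bracket to simplify to $-(1-\psi_2)(u-v)\neq 0$, so $P=0$, which is equation \eqref{eqsolz}. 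Substituting back then forces $D+c/k=0$, i.e. $r_k=r_h$.

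Next, $r_c$ is constant because $w$ is constant. The first equation of \eqref{eqsysb} then gives $c/k$ constant, since $r_k=r_h$ and $r_h$ is constant from the second equation. So $r_c=r_k=r_h=r_*$, and $r_*$ is read off the third equation, giving \eqref{eqsolr}. The outputs satisfy $y_1=k\cdot(\text{const})$ and $y_2=h\cdot(\text{const})$ by homogeneity of degree one, so they share the growth rate $r_*$. This proves (i).

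For (iii), I will solve the $\dot h/h$ equation for $u_*$. Then I will invert the definition of $w$ in \eqref{eqsr}: writing $\tau=w^{(\psi_1-\psi_2)/(1-\psi_2)}=\frac{v(1-u)}{u(1-v)}$ and solving this linear relation for $v$ gives $v_*=\tau u_*/(1+(\tau-1)u_*)$. For (iv), I will use $r_k=r_*$ in the first equation to get $q_*=A_1v_*(\theta w)^{-1}P_1^{1/\psi_1}-\delta_k-r_*$. Substituting $r_*$ and writing $P_1^{1/\psi_1}=P_1\cdot P_1^{1/\psi_1-1}$ with $P_1=\alpha_1(\theta w)^{\psi_1}+1-\alpha_1$ should regroup into $P_\varepsilon$. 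This step is algebraic bookkeeping only.

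The main obstacle is the uniqueness of the root $w_*$ of $P(w)=0$, together with $u_*,v_*\in[0,1]$ and positivity of $r_*$ and $q_*$ under the stated bounds on $\rho$. For uniqueness I will show that $P$ is strictly monotone in $w$. Differentiating, the first term $\alpha_1A_1(\theta w)^{\psi_1-1}P_1^{1/\psi_1-1}$ equals $\alpha_1A_1[\alpha_1+(1-\alpha_1)(\theta w)^{-\psi_1}]^{(1-\psi_1)/\psi_1}$, which is decreasing in $w$. The second term $(1-\alpha_2)A_2P_2^{1/\psi_2-1}$ is increasing, since $P_2^{(1-\psi_2)/\psi_2}$ is increasing in $w$ for either sign of $\psi_2$. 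So $P$ is strictly decreasing. The limits as $w\to 0$ and $w\to\infty$ are then governed by $A_1\alpha_1^{1/\psi_1}$ and $A_2(1-\alpha_2)^{1/\psi_2}$ (the sign of $\psi_i$ matters here). Together with the parameter inequalities, these limits should give a sign change, and hence a unique root.

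Finally, the inequalities on $\rho$ are used as follows. The lower bound $(\varepsilon-1)[\delta_h-A_2(1-\alpha_2)^{1/\psi_2}]<\rho$ gives $u_*<1$ and $q_*>0$, i.e. transversality. The upper bound gives $r_*>0$, hence $u_*>0$, and then $v_*\in(0,1)$ follows automatically from the Möbius formula. I expect the case analysis on the signs of $\psi_1$ and $\psi_2$ in the boundary limits to be the delicate part.
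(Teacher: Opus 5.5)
\textbf{Where the proposal matches the paper.} Most of your plan follows the paper's proof. This covers $D+c/k=0$ and $P=0$ from the $\dot u,\dot v$ equations. It also covers the constancy of $c/k$ giving $r_c=r_k=r_h$, the formulas in (iii)--(iv), and the strict decrease of $P$. (The paper reads $D+c/k=0$ and $QP=0$ directly off \eqref{eqdifuv1}--\eqref{eqdifuv2}. Subtracting the two vanishing brackets of \eqref{eqsysb} gives $\frac{QP}{R}(G_2-G_1)=0$, not the expression with $G_2(1-u)-G_1(1-v)$ you wrote, but $P=0$ follows either way.)

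\textbf{The gap: the bounds on $\rho$ are assigned to the wrong conclusions.} This happens in your final paragraph.
\begin{itemize}
\item From $u_*=1-(r_*+\delta_h)/(A_2P_2^{1/\psi_2})$, positivity of $r_*$ (with $\delta_h\ge0$) gives $u_*<1$, not $u_*>0$. A large $r_*$ makes $u_*$ negative, so $u_*>0$ needs an \emph{upper} bound on $r_*+\delta_h$.
\item Conversely, a lower bound on $\rho$ cannot give $u_*<1$.
\end{itemize}
The missing step is the paper's device. Use $P(w_*)=0$ to eliminate the goods-sector term, which gives $\varepsilon(r_*+\delta_h)=S_2$ with $S_2=(1-\alpha_2)A_2P_2^{1/\psi_2-1}-\rho+(\varepsilon-1)\delta_h$. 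Hence $u_*=1-S_2/(\varepsilon A_2P_2^{1/\psi_2})$ depends on $w$ only through $P_2$.
\begin{itemize}
\item $u_*<1$ iff $S_2>0$. This holds because $S_2$ increases in $w$ from $A_2(1-\alpha_2)^{1/\psi_2}-\rho+(\varepsilon-1)\delta_h>0$, which is the second entry of the minimum.
\item $u_*>0$ iff $S_3=\varepsilon A_2P_2^{1/\psi_2}-S_2>0$. This holds because $S_3$ increases from $(\varepsilon-1)A_2(1-\alpha_2)^{1/\psi_2}+\rho-(\varepsilon-1)\delta_h>0$. This is exactly where the lower bound on $\rho$ is used.
\end{itemize}
The first entry of the minimum is used only for $r_*>0$, via the first term of $P$ decreasing to $A_1\alpha_1^{1/\psi_1}$.

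Your claim that the lower bound yields $q_*>0$ is unsupported and is not part of the statement. The paper's transversality condition is $\rho+(\varepsilon-1)r_*>0$, which comes from $\varepsilon>1$ and $r_*>0$.

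\textbf{The hypothesis on $\rho$ is not what gives existence of $w_*$.} This runs against your expectation.
\begin{itemize}
\item For $\psi_1,\psi_2\in(0,1)$, the first term of $P$ blows up as $w\to0$ and the second as $w\to\infty$. So $P$ runs from $+\infty$ to $-\infty$, and your monotonicity argument gives a unique root with no parameter condition. This is the case around which the paper's limits, and the constants $A_1\alpha_1^{1/\psi_1}$ and $A_2(1-\alpha_2)^{1/\psi_2}$ in the hypothesis, are built.
\item If some $\psi_i<0$, the limits are finite and the stated bounds do not force a sign change. For example, with $\psi_1>0>\psi_2$ you would need $A_1\alpha_1^{1/\psi_1}-\delta_k<A_2(1-\alpha_2)^{1/\psi_2}-\delta_h$, which does not follow. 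The boundary values of $S_2$ and $S_3$ above also change in this case.
\end{itemize}
So the sign-case analysis you anticipate cannot be closed from the given hypothesis. You should restrict to $\psi_i\in(0,1)$, as the paper's proof implicitly does.
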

\noindent {\bf Proof of Proposition 1.}
If $r_u=r_v =0$ for all $t \geq t_*$ and since $P_1P_2 \neq 0$ for all $t \geq t_*$, from equations \eqref{eqdifuv1} and \eqref{eqdifuv2}, it immediately follows that $r_{k^*} = r_{h^*}$ and $r_{\mu^*}= r_{\lambda^*}$.
Of course, because we have $r_{k_*} = r_{h_*}$, we also must have $r_{y_{1*}}=r_{y_{2*}}=r_{k_*}$. This statement is obtained simply by log-differentiating the two production functions. From the two equations of \eqref{eqsysb1} we obtain \eqref{eqsolz} and $w_*$ will be the unique solution of this equation. (It is just a simply exercise to prove that this equation has a unique solution. Indeed, $\lim\limits_{w\rightarrow 0}P(w) = +\infty$, $\lim\limits_{w\rightarrow +\infty}P(w) = -\infty$ and $P^{\prime} < 0$ for all $w > 0$. Consequently, the function $P$ is a strictly decreasing function on $(0, \infty)$ and therefore there exists a unique solution of the equation $P(w) = 0.$) Substituting now $w_*$ into the third equation of the system \eqref{eqsysb} we get the common growth rate of the economy given by \eqref{eqsolr}. Let us denote $$S_1 = A_1\alpha_1\left(\theta w\right)^{\psi_1-1}P_1^{\frac{1}{\psi_1}-1}-\rho-\delta_k,$$ and we have:
$\lim\limits_{w\rightarrow 0}S_1= +\infty$ and $\lim\limits_{w\rightarrow +\infty}S_1= A_1\alpha_1^{\frac{1}{\psi_1}}-\rho-\delta_k > 0.$ Also the first derivative of the function $S_1$ $wrt$ $w$ is strictly negative and thus, the function $S_1$ is a strictly decreasing positive function and therefore, the common growth rate will be positive. Substituting $r_*$ from equation \eqref{eqsolr} into the second equation of  the system \eqref{eqsysb} we obtain \eqref{eqsolu}. We can now use the equation \eqref{eqsolz} in order to rewrite the equation \eqref{eqsolu} to obtain
$$u_* = 1-\frac{\left(1-\alpha_2\right)A_2P_2^{\frac{1}{\psi_2}-1}-\rho+\left(\varepsilon-1\right)\delta_h}{\varepsilon A_2 P_2^{\frac{1}{\psi_2}}}.$$
If we denote by $S_2 = \left(1-\alpha_2\right)A_2P_2^{\frac{1}{\psi_2}-1}-\rho+\left(\varepsilon-1\right)\delta_h$, then we observe that $S_2(0) = A_2 \left(1-\alpha_2\right)^{\frac{1}{\psi_2}}-\rho+(\varepsilon-1)\delta_h>0$. Also, $\lim\limits_{w\rightarrow +\infty}S_2(w)=+\infty$. The derivative of the function $S_2$ $wrt$ $w$ is obviously positive and therefore the function $S_2$ is a positive increasing function.
Consequently, to prove that $u_* \in (0, 1)$, it is enough to prove that the function $S_3 = \varepsilon A_2 P_2^{\frac{1}{\psi_2}}-S_2$ is a positive function for all $w > 0$. Indeed, we have $S_3(0) = (\varepsilon-1)A_2(1-\alpha_2)^{\frac{1}{\psi_2}}+\rho-(\varepsilon-1)\delta_h > 0$ and $\lim\limits_{w\rightarrow +\infty}S_3(w)=+\infty$. The derivative of the function $S_3$ $wrt$ $w$ is positive for all $w > 0$ and therefore the function $S_3$ is a positive increasing function and thus we proved that $u_* \in (0, 1)$. We can now substitute the result for $u_*$ into the first equation of the system \eqref{eqsysb} to obtain the steady state value for the variable $\frac{c}{k}$ given in equation \eqref{eqsolc/k}.
Under the hypotheses $u_*\in (0, 1)$ it immediately follows that $v_*\in (0, 1)$ and thus the proof is completed.$\square$

We have now to verify if the steady-state found above, satisfies the transversality conditions. Equivalently, the two conditions are true if the following limits are both negative, that is
$l_1=\lim\limits_{t\rightarrow\infty}\left(\frac{\dot{k}}{k}+\frac{\dot{\lambda}}{\lambda}-\rho\right)<0$ and $l_2=\lim\limits_{t\rightarrow\infty}\left(\frac{\dot{h}}{h}+\frac{\dot{\mu}}{\mu}-\rho\right)<0.$
Substituting the corresponding elements from the system \eqref{eqsysb} and passing to the limit for $t\rightarrow\infty$ we obtain for both limits the following result:
$l_1= l_2 = -\left[\rho+\left(\varepsilon-1\right)r_*\right].$
This limit is obviously negative, for all $\varepsilon > 1$, that is if the inverse of the elasticity of intertemporal substitution is greater than one. Without loss of generality, everywhere in this paper we assume that this inequality holds and therefore, since both transversality conditions are true, we may conclude that the solution obtained for the steady-state, is the unique optimal solution.
\section{Stability analysis}
As we can observe from \eqref{eqsysb}, the dynamic system that drives the economy over time identifies five variables, two state and three control variables. Also, as claimed by the above proposition, along the balanced growth path, both control variables $u$ and $v$ are constant, but the state variables $k$ and $h$, as well as the other control variable $c$, grow at a positive common rate $r_*$. Therefore, we need to transform the three variables having a positive common growth rate along the balanced growth path into two new stationary variables, that is which remain constant along the balanced growth path. This approach is necessary in order to have the possibility to decide on the stability of the balanced growth path. The new stationary variables that can be defined are $z=\frac{k}{h}$ and $q = c/k$ and, together with the other two variables $u$ and $v$, enable us to formulate a new system of differential equations describing the dynamics of the economy around the steady-state.

From the first two equations of the system \eqref{eqsysb} we obtain
$$\frac{\dot{z}}{z} = \frac{\dot{k}}{k}-\frac{\dot{h}}{h} = A_1v\left(\theta w\right)^{-1}P_1^{\frac{1}{\psi_1}}-A_2P_2^{\frac{1}{\psi_2}}(1-u)-\delta_{k} +\delta_{h} - \frac{c}{k}=-D- \frac{c}{k}.$$
From the first and the third equations of the system \eqref{eqsysb} we get
$$\frac{\dot{q}}{q} = \frac{\dot{c}}{c}-\frac{\dot{k}}{k} = q-\frac{A_1 P_{\varepsilon}P_1^{\frac{1}{\psi_1}-1}}{\varepsilon\theta w} -\frac{\rho-(\varepsilon-1)\delta_{k}}{\varepsilon},$$
and thus we obtain the following differential system of stationary variables.
\begin{equation}\label{eqsysbgp}
\left\{
  \begin{array}{llll}
\dot{z}= -\left\{D + q\right\}z\\\\
\dot{q} = \left\{q-\frac{A_1 P_{\varepsilon}P_1^{\frac{1}{\psi_1}-1}}{\varepsilon\theta w} -\frac{\rho-(\varepsilon-1)\delta_{k}}{\varepsilon}\right\}q,\\\\
\dot{u}=\left[D+q+\frac{G_2QP}{R}\right]\frac{u(1-u)}{u-v},\\\\
\dot{v}=\left[D+q+\frac{G_1 QP}{R}\right]\frac{v(1-v)}{u-v}.\\
\end{array}
  \right.
\end{equation}
As we can observe from the equation \eqref{eqsr}, the new stationary variable $z$ depends explicitly on the control variables $u$ and $v$, and therefore the determinant of the Jacobian matrix, evaluated at steady state will be equal to zero. Consequently, the reduced system - from five to four variables, will not allow to claim the saddle-path stability. As it is well-known, a system is not necessarily saddle path stable if the determinant of the Jacobian is zero and a zero determinant doesn't give enough information to determine the stability type, and further analysis of the eigenvalues is needed. Also, a zero determinant can indicate a bifurcation point or other complex behavior.
\subsection{Some final conclusions and remarks}
If we consider that the two sectors are modeled by two distinct Cobb-Douglass production functions, that is
$$f= A_1\left(kv\right)^\beta\left(hu\right)^{1-\beta}\;\mbox{and}\; g=A_2\left[k(1-v)\right]^\alpha\left[h(1-u)\right]^{1-\alpha}$$ then we immediately obtain
\begin{equation}\label{eqvfu}
v=\frac{u}{u+\theta(1-u)},\;\theta = \frac{\alpha(1-\beta)}{\beta(1-\alpha)},
\end{equation}
and the four differential equations that describe the dynamical system that drives the economy over time
\begin{equation}\label{eqsysf}
\left\{
\begin{array}{lllll}
\frac{\dot{k}}{k} = \frac{A_1uw^{\beta-1}}{u+\theta(1-u)}-\frac{c}{k}-\delta_k,\\\\
\frac{\dot{h}}{h} = A_2\theta^\alpha(1-u)w^\alpha-\delta_h ,\\\\
\frac{\dot{c}}{c}=\frac{1}{\varepsilon}\left(A_1\beta w^{\beta-1}-\delta_k-\rho\right),\\\\
\frac{\dot{u}}{u}=\left[(\beta-\alpha)\left(D-\frac{c}{k}\right)+
P\right]\frac{u+\theta(1-u)}{(\beta-\alpha)(1-\theta)u},\\
\end{array}
\right.
\end{equation}
where
$$D=\frac{A_1uw^{\beta-1}}{u+\theta(1-u)}-A_2\theta^\alpha (1-u)w^\alpha+\delta_h-\delta_k,$$
$$P=A_2(1-\alpha)\theta^\alpha w^\alpha-A_1\beta w^{\beta-1}+\delta_k-\delta_h,$$
and the two differential equations describing the trajectories of the dual variables.
$$\frac{\dot{\lambda}}{\lambda}=\rho+\delta_{k}-A_1\beta w^{\beta-1},\\\\
\frac{\dot{\mu}}{\mu}= \rho+\delta_{h}-A_2(1-\alpha)\theta^\alpha w^\alpha.$$
As in the case of the first model, this system reaches the unique balanced growth path and the following statements are valid:
\begin{enumerate}
\item[i.] $r_{k^*} = r_{h^*} = r_{c^*}  = r_{y^*_1} = r_{y^*_2} = r_*$, where $y_1$ and $y_2$ represent the two $CD$ production functions and
\begin{equation}\label{eqsolrcd}
r_*=\frac{1}{\varepsilon}\left[A_1\beta w^{\beta-1}-\delta_k-\rho\right]
\end{equation}
\item [ii.] $w_{*}$ is the unique solution of the equation $P(w) = 0$, where
\begin{equation}\label{eqsolwcd}
P(w)=A_2(1-\alpha)\theta^\alpha w^\alpha-A_1\beta w^{\beta-1}+\delta_k-\delta_h,
\end{equation}
\item[iii.] $u_* \in [0, 1]$, $v_* \in [0, 1]$ and
\begin{equation}\label{eqsolucd}
u_* = 1-\frac{r_*+\delta_h}{A_2\theta^{\alpha}w_*^\alpha},\;v_* = \frac{u_*}{u_*+\theta\left(1-u_*\right)},
\end{equation}
\item[iv.]
\begin{equation}\label{eqsolc/kcd}
q_*=\frac{c_*}{k_*} = \frac{1}{\varepsilon} \left[A_1w^{\beta-1}P_{\varepsilon} +\rho-\delta_k\left(\varepsilon-1\right)\right],
\end{equation}
where $$P_{\varepsilon}=\frac{(\varepsilon-\beta)u_*-\beta\theta(1-u_*)}{u_*+\theta(1-u_*)}.$$
\end{enumerate}
(The proof is similar and therefore is not presented here.)
The dynamics of the economy around the steady-state, in terms of the stationary variables $z=\frac{k}{h}$, $q = c/k$ and $u$ - which is constant at the balanced growth path - is described by the following system of equations:
\begin{equation}\label{eqldrs}
\left\{
\begin{array}{lll}
\dot{z}=\left[D(z,u)-q\right]z,\\\\
\dot{q}=\left[-\frac{A_1 H(z,u)}{\varepsilon}+q+\frac{(\varepsilon-1)\delta_k-\rho}{\varepsilon}\right]q,\\\\
\dot{u}=\left\{(\beta-\alpha)\left[D(z,u)-q\right]+P(z,u)\right\}\frac{u+\theta(1-u)}{(\beta-\alpha)(1-\theta)},\\
\end{array}
\right.
\end{equation}
where
$$D(z,u)=\frac{A_1uz^{\beta-1}}{\left[u+\theta(1-u)\right]^\beta}-\frac{A_2\theta^\alpha (1-u)z^\alpha}{\left[u+\theta(1-u)\right]^\alpha}+\delta_h-\delta_k,$$
$$P(z,u)=\frac{A_2(1-\alpha)\theta^\alpha z^\alpha}{\left[u+\theta(1-u)\right]^\alpha}-\frac{A_1\beta z^{\beta-1}}{\left[u+\theta(1-u)\right]^{\beta-1}}+\delta_k-\delta_h,$$
$$H(z,u) = \frac{z^{\beta-1}P_\varepsilon}{\left[u+\theta(1-u)\right]^{\beta-1}}.$$
Unfortunately, the results obtained for the Jacobian matrix do not allow us for an incontestable interpretation of the signs of these eigenvalues. Nevertheless, all the numerical simulations we made, confirm that at least one eigenvalue is negative and therefore, we can claim that there exists a unique optimal steady-state equilibrium that is saddle-path stable.
\begin{rem}
If we choose the following two cases for the benchmark values for the economy:
\begin{enumerate}
  \item $A_1=1.05,\;A_2=0.20\;\alpha=0.45,\;\beta=0.75,\;\delta_k=0.06,\;\delta_h=0.05,\;\varepsilon=2,
      \;\rho=0.06.$
  \item $A_1=1.05,\;A_2=0.20\;\alpha=0.75,\;\beta=0.45,\;\delta_k=0.06,\;\delta_h=0.05,\;\varepsilon=2,
      \;\rho=0.06$
\end{enumerate}
and the results of the numerical simulations for the two cases, are:
\begin{itemize}
  \item[1] $z_* = 29.57, u_* = 0.73, v_*= 0.91, q_*= 0.23$,
  $EV=\left[0.16; 0.89; -.73\right]$,
  \item[2] $z_* = 3.42, u_* = 0.88, v_*= 0.67, q_*= 0.28$,
  $EV=\left[0.14; 1.19; -1.05\right]$,
\end{itemize}
where $EV$ signify eigenvalues.
\end{rem}
\begin{rem}
If we consider the case of the same elasticity of substitution, that is $\psi_1 = \psi_2 =\psi > 1$, but the other parameters are distinct, that is $A_1\neq A_2, \alpha_1 \neq \alpha_2$, then the results confirm the hypothesis of Bond et al.(see the recent paper of Chilarescu $2025$).
\end{rem}


\begin{thebibliography}{99}
\bibitem{Ben} Benhabib J. and Perli R., $1994$. Uniqueness and
Indeterminacy: On the Dynamics of Endogenous Growth. {\it Journal of
Economic Theory}, $63, 113 - 142$.
\bibitem{Bon} Bond, E. W., Wang, P. and Yip, C. K., 1996. A General
Two-Sector Model of Endogenous Growth with Human and Physical
Capital: Balanced Growth and Transitional Dynamics, {\it Journal
of Economic Theory} $68, 149 - 173$.
\bibitem{Chi} Chilarescu C., $2025$. Elasticity of substitution and economic growth: Some new results.
{\it Mathematical Methods in the Applied Sciences}, $48$ $(5)$, $5896 - 5905$.
\bibitem{Reb} Rebelo, S.,  1991. Long-Run Policy Analysis and
Long-Run Growth, {\it Journal of Political Economy} 99, 500 - 521.
\end{thebibliography}
\end{document}